\newcommand{\successor}{\textsf{succ}}
\newtheorem{theorem}{Theorem}
\newtheorem{lemma}[theorem]{Lemma}
\title{Distinct Squares in Circular Words}
\author[1,2]{Mika Amit}
\author[2,3]{Paweł Gawrychowski}
\affil[1]{IBM Research, Haifa, Israel}
\affil[2]{University of Haifa, Israel}
\affil[3]{University of Wrocław, Poland}
\date{}
\begin{document}

\maketitle

\begin{abstract}
A circular word, or a necklace, is an equivalence class under conjugation of a word.
A fundamental question concerning regularities in standard words is bounding the
number of distinct squares in a word of length $n$. The famous conjecture attributed
to Fraenkel and Simpson is that there are at most $n$ such distinct squares, yet the
best known upper bound is $1.84n$ by Deza et al. [Discr. Appl. Math. 180, 52-69 (2015)].
We consider a natural generalization of this
question to circular words: how many distinct squares can there be in all cyclic
rotations of a word of length $n$? We prove an upper bound of $3.14n$. This
is complemented with an infinite family of words implying a lower bound of $1.25n$.
\end{abstract}

\section{Introduction}

Combinatorics on words is mostly concerned with regularities in words. The most basic
example of such a regularity is a square, that is, a substring of the form $uu$.
We might either want to create words with no such substrings, called square-free, or
show that there cannot be too many distinct squares for an arbitrary word of length
$n$. Fraenkel and Simpson proved that $2n$ is an upper bound on the number
of distinct squares contained in a word of length $n$, and also constructed
an infinite family of words of length $n$ containing $n-\Theta(\sqrt{n})$ distinct
squares~\cite{FraenkelS98}. Their upper bound uses a combinatorial
lemma of Crochemore and Rytter~\cite{CrochemoreR95}, called the Three Squares Lemma.
Later, Ilie provided a short and self-contained argument~\cite{Ilie05}.
The Three Squares Lemma is concerned with the rightmost occurrence of every
distinct square, and says that, for any position in the word, there do not exist
three such rightmost occurrences starting at that position (hence the name of the lemma).
It is widely believed that the example given by Frankel and Simpson is the worst possible,
and the right bound is $n$ instead of $2n$. The best known upper bound
was $2n-\Theta(\log n)$~\cite{Ilie2007note} until recently Deza, Franek and Thierry
improved the upper bound to $11/6n$ through a somewhat involved argument~\cite{DezaFT15}.
All these bounds are based on the idea of looking at three rightmost occurrences of squares starting
at the same position. It is known that two such occurrence already imply a certain
periodic structure~\cite{KopylovaS12,FranekFSS12,BlandS15,FanPST06,Simpson07},
and that it is enough to consider binary words~\cite{ManeaS15}.

Regularities are commonly considered in more general contexts than standard words,
such as partial words~\cite{Blanchet-SadriMS09} or trees~\cite{GawrychowskiKRW15,CrochemoreIKKRRTW12}. Another natural
generalization of standard words, motivated by the circular structure of some
biological data, are circular words (also known as necklaces).
A circular word $(w)$ is defined as an equivalence class under conjugation of a word $w$,
that is, it corresponds to all possible rotations of $w$.
Both algorithmic~\cite{HegedusN14,CrochemoreFMP16,CastiglioneRS09}
and combinatorial aspects of such words have been studied.
The latter are mostly motivated by an old result of Thue~\cite{thue1906},
who showed that there is an infinite square-free word over $\{0,1,2\}$.
This started a long line of research of pattern avoidance.
Currie and Fitzpatrick~\cite{CurrieF02} generalized this to circular
words, and then Currie~\cite{Currie02} showed that for any $n\geq 18$ there exists a circular square-free
word of length $n$ (see also a later proof by Shur~\cite{Shur10}).
Recently, Simpson~\cite{Simpson14} considered bounding the number of distinct palindromes in a circular
word of length $n$. It is well-known (and easy to prove) that the number of distinct palindromes
in a standard word of length $n$ is at most $n$. Interestingly, this increases to $5/3n$
for circular words.
Also equations on circular words have been studied~\cite{MasseBGL11}.

We consider the following question: how many distinct squares can there be in a circular
word of length $n$? Note that due to how we have defined a circular word, we are interested
in squares of length at most $n$.
Recall that the $2n$ bound of Fraenkel and Simpson~\cite{FraenkelS98} is based
on the notion of rightmost occurrences. The improved $11/6n$ bound of Deza et al.~\cite{DezaFT15}
is also based on this concept. For a circular word, it is not clear what the rightmost occurrence
might mean, and indeed the proofs seem to completely break. Of course, to bound
the number of distinct squares in a circular word $w$ of length $n$, one can simply bound the
number of distinct squares in a word $ww$ of length $2n$, thus immediately obtaining an
upper bound of $4n$ (by invoking the simple proof of Ilie~\cite{Ilie05}) or
$3.67n$ (by invoking the more involved proof of Deza et al.~\cite{DezaFT15}). This,
however, completely disregards the cyclic nature of the problem.

We start with exhibiting an infinite family of circular words of length $n$ containing
$1.25n-\Theta(1)$ distinct squares. Therefore, it appears that the structure of distinct
squares in circular words is more complex than in standard words. We then continue
with a simple and self-contained upper bound of $3.75n$ on the number of distinct
squares in a circular word of length $n$. Then, by invoking some of the machinery used
by Deza et al.~\cite{DezaFT15}, we improve this to $3.14n$.

\section{Preliminaries}

Let $|w|$ denote the length of a string $w$, $w[i]$ is the $i$-th character of $w$, and $w[i..j]$ is a shortcut for $w[i]w[i+1]\ldots w[j]$.
A natural number $p$ is a period of $w$ iff $w[i]=w[i+p]$ for every $i=1,2,\ldots,|w|-p$.
The smallest such $p$ is called the period of $w$.
We say that $w$ is periodic if its period is at most $|w|/2$, otherwise $w$ is aperiodic.
The well-known periodicity lemma says that if $p$ and $q$ are both periods of $w$ and furthermore
$p+q\leq |w|+\gcd(p,q)$ then $\gcd(p,q)$ is also a period of $w$~\cite{FineWilf}.

$w^{(i)}$ denotes the cyclic rotation of $w$ by $i$, that is, $w[i..|w|]w[1..(i-1)]$.
A circular word $(w)$ is an equivalence class under conjugation of $w$, that is, all cyclic rotations
$w^{(i)}$. A word $uu$ is called a square, and we say that it occurs in $(w)$ if it occurs in $w^{(i)}$
for some $i$. We are interested in bounding the number of distinct squares occurring in a circular
word of length $n$.

\section{Lower bound}

We define an infinite family of words $f_k = \texttt{a}(\texttt{ba})^{k+1}\texttt{a}(\texttt{ba})^{k+2}\texttt{a}(\texttt{ba})^{k+1}\texttt{a}(\texttt{ba})^{k+2}$. See Table~\ref{tbl:family} for an example.
Observe that $|f_k| = 8k+16$. We claim that cyclic rotations of $f_k$ contain many distinct squares.

\begin{figure}[t]
\centering
\begin{tabular}{l l c}
\hline
$ {\bm k}$ & ${\bm f_{\bm k}}$ &  {\bf \#squares} / ${|\bm f_{\bm k}|}$ \\
\hline
1 & \texttt{ababaabababaababaabababa} & 25/24 \\
2 & \texttt{abababaababababaabababaababababa} & 36/32 \\
3 & \texttt{ababababaabababababaababababaabababababa} & 45/40 \\
4 & \texttt{abababababaababababababaabababababaababababababa} & 56/48 \\
5 & \texttt{ababababababaabababababababaababababababaabababababababa} & 65/56 \\
\hline
\end{tabular}
\caption{The number of distinct squares in $f_{k}$, for $k=1,2,3,4,5$.}
\label{tbl:family}
\end{figure}

\begin{lemma}
\label{lem:lowerbound}
For any $k\geq 0$, the circular word $(f_{k})$ contains $10k+16-(k\bmod 2)$ distinct squares.
\end{lemma}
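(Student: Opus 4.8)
The plan is to exploit the fact that $f_k$ is itself a square. Writing $B_1=\texttt{a}(\texttt{ba})^{k+1}$ and $B_2=\texttt{a}(\texttt{ba})^{k+2}$ we have $f_k=B_1B_2B_1B_2=(B_1B_2)^2$, so the circular word $(f_k)$ is periodic with period $D:=4k+8=|B_1B_2|$. Since the two factors $\texttt{aa}$ inside $B_1B_2$ sit at unequal gaps, the word cannot have a smaller period, i.e.\ $B_1B_2$ is primitive. Consequently the distinct squares of $(f_k)$, which have length at most $n=8k+16=2D$, are exactly the distinct square factors of length at most $2D$ of the purely periodic word $(B_1B_2)^{\infty}$. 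I would then record the structural picture driving everything: $(B_1B_2)^{\infty}$ is the alternating word $\texttt{ababab}\cdots$ with isolated ``defects'' $\texttt{aa}$, recurring with gaps alternating between $P:=2k+5$ and $Q:=2k+3$; both gaps are odd, $P+Q=D$, and crossing one defect flips the $\texttt{a}/\texttt{b}$ parity.

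Next I would classify every square $uu$ by how its defects interact with the period $\ell=|u|$. A factor $\texttt{aa}$ at offset $i$ in $uu$ forces another at offset $i\pm\ell$ whenever that stays inside $uu$, so the defects of $uu$ are $\ell$-periodic; between defects the word is strictly alternating, so an even (resp.\ odd) shift must preserve (resp.\ reverse) parity. This leaves only four possibilities: (i) $uu$ has no defect, hence lies inside one block and is alternating, $u\in\{(\texttt{ab})^{j},(\texttt{ba})^{j}\}$; (ii) $uu$ has a single central defect, which forces $u=\texttt{a}(\texttt{ba})^{m}$ of odd length $2m+1$; (iii) $uu$ contains two matched defects at distance $\ell$, where $\ell$ must be a sum of consecutive gaps, so cyclically $\ell\in\{P,Q\}$; and (iv) the full-length case $\ell=D$. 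Proving this list exhaustive — in particular excluding every other spacing, using that consecutive gaps are $P,Q$ together with their parities — is the conceptual core of the argument.

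I would then count each class. The alternating squares fit inside the longest block (length $P$), giving $(\texttt{ab})^{2j}$ and $(\texttt{ba})^{2j}$ for $1\le j\le\lfloor (k+2)/2\rfloor$, i.e.\ $2\lfloor (k+2)/2\rfloor=k+2-(k\bmod 2)$ squares; this is the only parity-dependent class and the source of the $-(k\bmod 2)$ term. The central-defect squares $\texttt{a}(\texttt{ba})^{m}\texttt{a}(\texttt{ba})^{m}$ exist precisely for $0\le m\le k+1$ (bounded by the shorter adjacent block, which is of type $B_1$), giving $k+2$ of them, with $m=0$ being $\texttt{aa}$. For the remaining two periods I would compute the maximal run carrying each: the period-$P$ run has length $P+2Q$ and the period-$Q$ run has length $3Q$; sliding a window of length $2P$ (resp.\ $2Q$) then yields $2Q-P+1=2k+2$ distinct period-$P$ squares, and $Q=2k+3$ distinct period-$Q$ squares (the $Q+1$ windows of the period-$Q$ run give one coincidence, two of them being $Q$ apart). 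Finally, every length-$2D$ window of $(B_1B_2)^{\infty}$ is a square $vv$ with $v$ a rotation of the primitive word $B_1B_2$, so primitivity gives $D=4k+8$ distinct full-length squares.

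The last point to settle is the overlap bookkeeping, which is exactly where the constant $16$ rather than $17$ comes from: the largest central-defect square, $m=k+1$, has root length $2k+3=Q$, so it is already one of the period-$Q$ squares. This is the unique coincidence among the families (all other pairs occupy distinct lengths, or lengths of different residue modulo $4$), so subtracting this single double-counted square gives
$$\bigl(k+2-(k\bmod 2)\bigr)+(k+2)+(2k+2)+(2k+3)+(4k+8)-1=10k+16-(k\bmod 2).$$
I expect the exhaustiveness of the classification and the precise run-length computations for periods $P$ and $Q$ — together with the distinctness check and this one overlap — to be the main obstacle; once they are in place the individual counts are routine.
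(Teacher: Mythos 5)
Your proposal is correct and takes essentially the same route as the paper: both classify squares $uu$ by how the occurrences of $\texttt{aa}$ sit inside them (none, one central, two at distance $|u|$, or three or more forcing $|u|=4k+8$), count each class inside explicit periodic runs, and use primitivity of $\texttt{a}(\texttt{ba})^{k+1}\texttt{a}(\texttt{ba})^{k+2}$ to get the $4k+8$ full-length rotations. The only difference is bookkeeping: the paper's classes are disjoint by construction (its two-$\texttt{aa}$ class contains $2k+2$ squares of root length $2k+3$, with $B_1B_1$ assigned to the one-$\texttt{aa}$ class), whereas you count all $2k+3$ period-$Q$ squares and subtract the single overlap, and the totals agree.
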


\begin{proof}
To  count distinct squares $uu$ occurring in $(f_k)$, we consider a few disjoint
cases. We first count $uu$ such that $aa$ occurs at most once inside:
\begin{enumerate}
\item Any $uu$ such that $\texttt{aa}$ does not occur inside must be be fully contained in an occurrence of
$\texttt{a}(\texttt{ba})^{k+2}$ or $\texttt{a}(\texttt{ba})^{k+1}$ in $f_{k}$. Thus, to count
such $uu$ we only have to find all distinct squares in $\texttt{a}(\texttt{ba})^{k+2}$.
For any $i=1,2,\ldots,\lfloor (k+2)/2\rfloor$, $(\texttt{ab})^i(\texttt{ab})^i$ and
$(\texttt{ba})^i(\texttt{ba})^i$ appear there, and it can be seen that there are
no other squares. Thus, the number of such $uu$ is exactly $2\lfloor (k+2)/2\rfloor$.
\item Any $uu$ such that $\texttt{aa}$ occurs exactly once inside must have the property
that $u$ starts and ends with $\texttt{a}$. It follows that such $uu$ must be be fully
contained in an occurrence of $\texttt{a}(\texttt{ba})^{k+1}\texttt{a}(\texttt{ba})^{k+1}$
in $f_{k}$. For any $i=0,1,\ldots,k+1$, $\texttt{a}(\texttt{ba})^{i}\texttt{a}(\texttt{ba})^{i}$
appears there, and it can be seen that there are no other squares containing exactly
one occurrence of $\texttt{aa}$, so there are exactly $k+2$ such $uu$.
\end{enumerate}

Then we count $uu$ such that $\texttt{aa}$ occurs exactly twice inside. Then, $\texttt{aa}$
must occur once in $u$ and furthermore, by analyzing the distances between the occurrences of
$\texttt{aa}$ in $f_{k}$, we obtain that $|u|=2k+5$ or $|u|=2k+3$. We analyze these
two possibilities:
\begin{enumerate}
\item If $|u|=2k+3$ then $uu$ appears in an occurrence of 
$(\texttt{ba})^k\texttt{baa}(\texttt{ba})^k\texttt{baa}(\texttt{ba})^k\texttt{b}$ in $f_{k}$.
There are $2k+2$ such $uu$.
\item If $|u|=2k+5$ then $uu$ appears in an occurrence of
$\texttt{a}(\texttt{ba})^k\texttt{baaba}(\texttt{ba})^{k}\texttt{baaba}(\texttt{ba})^{k}$ in $f_{k}$.
There are $2k+2$ such $uu$.
\end{enumerate}

Finally, we count $uu$ such that $\texttt{aa}$ occurs at least three times inside. By analyzing
the distances between the occurrences of $\texttt{aa}$ in $f_{k}$, we obtain that in such case
$|u|=4k+8$, so $|uu|=|f_{k}|$. We claim that there are
exactly $|f_k|/2=4k+8$ such $uu$. To prove this, write $f_k=x_k x_k$ with $x_k=\texttt{a}(\texttt{ba})^{k+1}\texttt{a}(\texttt{ba})^{k+2}$.
$x_k$ cannot be represented as a nontrivial power $y^p$ with $p\geq 2$, because $\texttt{aa}$
occurs only once inside $x_k$, so it would mean that $y$ starts and ends with $\texttt{a}$,
but then $p=2$ is not possible due to $|\texttt{a}(\texttt{ba})^{k+1}|\neq |\texttt{a}(\texttt{ba})^{k+2}|$,
and $p\geq 3$ would generate another occurrence of $\texttt{a}$.
Clearly, every cyclic shift of $f_k$ is a
square occurring in $(f_k)$, because a cyclic shift of a square is still a square. It remains
to count distinct cyclic shifts of $f_k$. Assume that two of these shifts are equal,
that is, $(f_k)^{(i)}=(f_k)^{(j)}$ for some $0\leq i < j < |f_k|$, so $x_k = (x_k)^{(j-i)}$.
Then $\gcd(|x_k|,j-i)$ is a period of $x_k$. But $x_k$ is not a nontrivial power,
so $j-i =0 \bmod |x_k|$. Consequently, every $i=0,1,\ldots,|x_k|-1$ generates
a distinct square.

All in all, the number of distinct squares occurring in $(f_{k})$ is
\[ k+2+2\lfloor (k+2)/2\rfloor+2(2k+2)+4k+8 = 9k+16+2\lfloor k/2\rfloor \]
or, in other words, $10k+16-(k \bmod 2)$.
\end{proof}

By Lemma~\ref{lem:lowerbound}, for any $n_{0}$ there exists a circular word of length $n\geq n_{0}$
containing at least $1.25n-\Theta(1)$ distinct squares.

\section{Upper bound}

Our goal is to upper bound the number of distinct squares occurring in a circular word $(w)$
of length $n$. Each such square occurs in $ww$, hence clearly
there are at most $4n$ such distinct squares by plugging in the known bound on the
number of distinct squares. However, we want a stronger bound.

Recall that the bound on the number of distinct squares is based on the notion
of the rightmost occurrence. For every distinct square $uu$ occurring in a word, we choose its
rightmost occurrence. Then, we have the following property.

\begin{lemma}[\cite{FraenkelS98}]
For any position $i$, there are at most two rightmost occurrences starting
at $i$.
\end{lemma}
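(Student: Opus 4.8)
The plan is to argue by contradiction, assuming that three distinct squares $u_1^2,u_2^2,u_3^2$ with $|u_1|<|u_2|<|u_3|$ all have their rightmost occurrence starting at the same position $i$. Writing $p_j=|u_j|$, the three squares are nested prefixes of the suffix beginning at $i$: since they start at the same position and $2p_1<2p_2<2p_3$, the word $u_1^2$ is a prefix of $u_2^2$, which is a prefix of $u_3^2$. I would then extract a contradiction from the tension between two facts, namely that rightmost-ness forces the three lengths to lie within a factor of two of each other, while the periodic structure shared by nested squares forbids three lengths from being that close.

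First I would isolate the elementary two-square fact that underlies everything: if $u^2$ and $v^2$ with $|u|<|v|$ both have their rightmost occurrence at $i$, then $|v|<2|u|$. Indeed, $v^2$ occupies $[i,i+2|v|-1]$ and has period $|v|$, so if we had $2|u|\le|v|$, then the prefix $u^2$, of length $2|u|$, would recur identically at position $i+|v|$ while still fitting inside $v^2$ (because $2|u|\le|v|$ gives $|v|+2|u|\le 2|v|$); this recurrence lies strictly to the right of $i$ and contradicts the rightmost-ness of $u^2$. Applying this to the pair $u_1^2,u_3^2$ yields $p_3<2p_1$, so in particular $p_1<p_2<p_3<2p_1$.

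The second ingredient is the Three Squares Lemma of Crochemore and Rytter~\cite{CrochemoreR95}: for the nested squares with $u_1$ primitive one has $p_1+p_2\le p_3$. I would prove this from the periodicity lemma, observing that on the prefix $u_2^2$ both $p_2$ and $p_3$ act as periods (the latter nontrivially, as $p_3<p_1+p_2<2p_2$), so that assuming $p_3<p_1+p_2$ a Fine--Wilf argument produces a common sub-period dividing $p_1$ and hence exhibits $u_1$ as a proper power, contradicting primitivity. Granting this, when $u_1$ is primitive the proof closes at once: $p_3\ge p_1+p_2>2p_1$ clashes with the bound $p_3<2p_1$ from the previous paragraph.

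The main obstacle, and the step I would spend the most care on, is the case where $u_1$ is \emph{not} primitive, say $u_1=t^k$ with $t$ primitive and $k\ge 2$, since then the Three Squares Lemma need not apply. Here I would first note that rightmost-ness already pins the maximal $t$-run starting at $i$ to be exactly $t^{2k}$: were the word to begin with $t^{2k+1}$, the square $u_1^2=t^{2k}$ would recur at position $i+|t|>i$. Consequently the character immediately following this run breaks the period $|t|$, and I would feed this break into the period-$p_2$ (and period-$p_3$) structure of the longer squares, whose roots are themselves prefixes of $t^\infty$ because $p_2,p_3<2p_1=2k|t|$, using the synchronizing property of the primitive word $t$ to again force either a forbidden recurrence or the inequality $p_1+p_2\le p_3$. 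Closing this case cleanly is the delicate part; once it is done the contradiction becomes uniform and the lemma follows.
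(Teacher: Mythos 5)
The paper does not actually prove this lemma---it is imported verbatim from Fraenkel and Simpson~\cite{FraenkelS98}---so your attempt has to be judged against the classical proofs (Fraenkel--Simpson via the Three Squares Lemma of Crochemore and Rytter~\cite{CrochemoreR95}, or Ilie's self-contained note~\cite{Ilie05}). Your skeleton is the classical one, and its first step is correct: rightmostness forces $p_3<2p_1$, since otherwise the period $p_3$ of $u_3^2$ would reproduce the prefix $u_1^2$ at position $i+p_3$. Granting the Three Squares Lemma, the primitive case then closes exactly as you say. But there are two genuine gaps, one of which you acknowledge yourself.

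First, your sketched derivation of the Three Squares Lemma does not work. On the word $u_2^2$, of length $2p_2$, the periods $p_2$ and $p_3$ satisfy the Fine--Wilf hypothesis only if $p_3\le p_2+\gcd(p_2,p_3)$; your standing assumption $p_3<p_1+p_2$ gives merely $p_3-p_2<p_1$, and $p_1$ can vastly exceed $\gcd(p_2,p_3)$ (e.g.\ $p_1=10$, $p_2=16$, $p_3=19$ passes all your inequalities but fails the Fine--Wilf condition). The genuine proof in~\cite{CrochemoreR95} is a synchronization argument, not a one-line periodicity-lemma application. This gap would be excusable if you invoked the lemma purely as a cited black box, which you are entitled to do. The second gap is not excusable: you leave the non-primitive case $u_1=t^k$, $k\ge 2$, explicitly unfinished (``closing this case cleanly is the delicate part''), and this is precisely where the difficulty of the whole lemma lives. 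Your observations there are sound as far as they go---the maximal period-$|t|$ run starting at $i$ has exponent exactly $2k$, and $u_2,u_3$ are prefixes of $t^\infty$---but the synchronization step you appeal to breaks down in a concrete regime: if $p_2=(2k-1)|t|+r$ with $0<r<|t|$, the portion of the second copy of $u_2$ lying inside the run can be shorter than $|t|$, so no full occurrence of $t$ is available to synchronize against, and the single period-breaking character after the run yields only $|t|\nmid p_2$, which is not a contradiction. Until that regime is handled by a finer argument, the proof is incomplete exactly where completeness is hardest to achieve.
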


Consider the rightmost occurrences of distinct squares of length up to $n$ in $ww$.
We first analyze the rightmost occurrences starting at positions $1,2,\ldots,\frac{1}{4}n$.

\begin{lemma}
\label{lem:quarter}
If $w[\frac{1}{4}n..\frac{1}{2}n]$ is aperiodic then every rightmost occurrence starting
at position $i\in\{1,2,\ldots,\frac{1}{4}n\}$ is of the same length.
\end{lemma}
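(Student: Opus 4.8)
The plan is to argue by contraposition: assuming two rightmost occurrences starting in $\{1,\dots,\tfrac14 n\}$ have different lengths, I would exhibit a short period of $w[\tfrac14 n..\tfrac12 n]$, contradicting its aperiodicity. First I would record a structural consequence of being rightmost. Since every square under consideration has length at most $n$ and $ww$ has the global period $n$ (that is, $ww[j]=ww[j+n]$ for $j\le n$), a rightmost occurrence of $uu$ starting at a position $p\le\tfrac14 n$ cannot be entirely contained in the first half $[1,n]$: otherwise the very same square would reoccur at position $p+n>p$, contradicting rightmostness. Hence, writing $a=|u|$, the occurrence spans $[p,p+2a-1]$ with $p+2a-1>n$, which together with $p\le\tfrac14 n$ and $2a\le n$ forces $a\in(\tfrac38 n,\tfrac12 n]$. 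In particular, any two such half-lengths $a<b$ satisfy $d:=b-a<\tfrac18 n$.

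The heart of the argument is then a chaining of the two periods, rather than a direct Fine--Wilf application on their common region (which turns out to be too short to combine periods as large as $a$ and $b$). Let the two occurrences be $P$ of half-length $a$ spanning $[p_a,R_a]$ and $Q$ of half-length $b$ spanning $[p_b,R_b]$, with $p_a,p_b\le\tfrac14 n$ and, by the observation above, $R_a,R_b>n$. For a position $i$ I would like to write $ww[i]=ww[i+b]=ww[i+b-a]=ww[i+d]$, using the period $b$ of $Q$ forwards and the period $a$ of $P$ backwards. The first equality is valid for $i\in[p_b,R_b-b]$ and the second for $i\in[p_a-d,R_a-b]$; their intersection contains the whole interval $[\tfrac14 n,\tfrac12 n]$, because both left endpoints are at most $\tfrac14 n$ while $\min(R_a,R_b)-b>n-\tfrac12 n=\tfrac12 n$. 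Thus $w[\tfrac14 n..\tfrac12 n]$ has period $d$.

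Finally, since $d<\tfrac18 n$ is at most half the length of $w[\tfrac14 n..\tfrac12 n]$, this factor would be periodic, contradicting the hypothesis; consequently all rightmost occurrences starting in the first quarter must share a single length. I expect the main obstacle to be exactly the realization that the common region carries two periods that are each individually too long to be exploited on their own, so one must instead cancel them against each other to expose the small period $d=b-a$ precisely on the target interval. The only delicate bookkeeping is checking that the chaining interval indeed covers $[\tfrac14 n,\tfrac12 n]$, and this is the one place where the constants $\tfrac14$ and $\tfrac12$ (via $p\le\tfrac14 n$ and $R_a,R_b>n$) enter essentially.
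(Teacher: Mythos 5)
Your proposal is correct and takes essentially the same route as the paper: rightmostness forces both occurrences to extend past position $n$, putting both half-lengths in $(\frac{3}{8}n,\frac{1}{2}n]$ so that their difference $d$ is less than $\frac{1}{8}n$, and the two square periods are then combined to give $w[\frac{1}{4}n..\frac{1}{2}n]$ the period $d$, contradicting aperiodicity. The only cosmetic difference is that the paper realizes this combination as two overlapping occurrences of $w[\frac{1}{4}n..\frac{1}{2}n]$ at positions $\frac{1}{4}n+|u|$ and $\frac{1}{4}n+|u'|$ (copies produced by the squares), whereas you chain the character equalities forward by $b$ and backward by $a$ --- the same argument written at the level of individual positions.
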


\begin{proof}
Assume otherwise, that is, $w[\frac{1}{4}n..\frac{1}{2}n]$ is aperiodic, but there
are two rightmost occurrences $uu$ and $u'u'$ starting at positions $i,i'\in\{1,2,\ldots,\frac{1}{4}n\}$,
respectively, in $ww$ such that $|u| > |u'|$.
Then, $i+2|u| > n$ and $i'+2|u'|>n$, as otherwise we could have found the
same square in the second half of $ww$. Because $|u|,|u'| \leq \frac{1}{2}n$, this implies
$i+|u| > \frac{1}{2}n$ and $i'+|u'| > \frac{1}{2}n$. So $w[\frac{1}{4}n..\frac{1}{2}n]$
\footnote{Formally, we need to appropriately round both $\frac{1}{4}n$ and $\frac{1}{2}n$. We chose not
to do so explicitly as to avoid cluttering the presentation.}
is fully inside the first half of both $uu$ and $u'u'$. But then it also appears starting
at positions $\frac{1}{4}n+|u|$ and $\frac{1}{4}n+|u'|$, see Figure~\ref{fig:move}.
The distance between these two distinct (due to $|u|>|u'|$) occurrences is
\[ (\frac{1}{4}n+|u|) - (\frac{1}{4}n+|u'|) = |u| - |u'| \]
We know that $|u|\leq \frac{1}{2}n$ and $|u'|>\frac{1}{2}n-i' \geq \frac{1}{2}n-\frac{1}{4}n=\frac{3}{8}n$.
Thus, the distance is less than $\frac{1}{2}n-\frac{3}{8}n = \frac{1}{8}n$
and we conclude that the period of $w[\frac{1}{4}n..\frac{1}{2}n]$ is at most $\frac{1}{8}n$,
which is a contradiction.
\end{proof}

\begin{figure}[t]
\begin{center}
\includegraphics[width=0.8\textwidth]{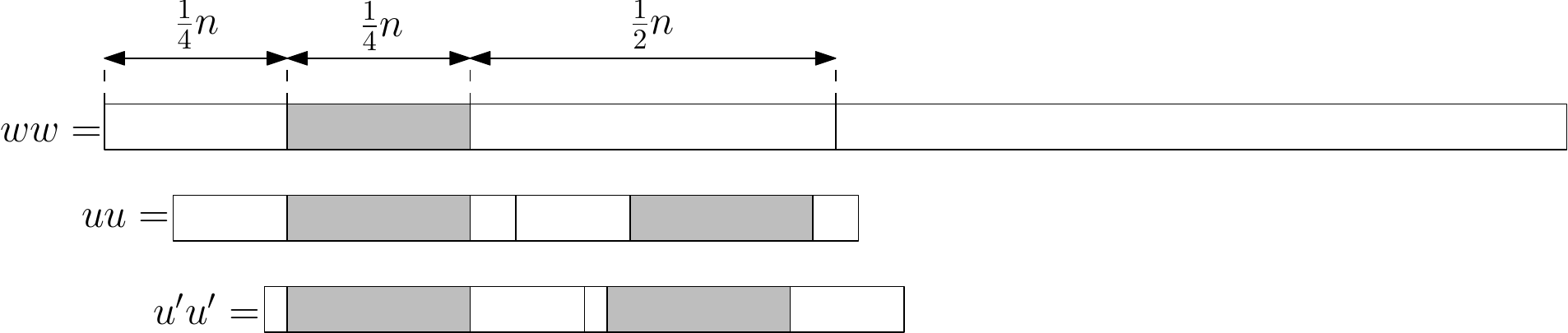}
\end{center}
\caption{Two rightmost occurrences of squares $uu$ and $u'u'$ in $ww$.}
\label{fig:move}
\end{figure}

By Lemma~\ref{lem:quarter}, assuming that $w[\frac{1}{4}n..\frac{1}{2}n]$ is aperiodic,
for every $i=1,2,\ldots,\frac{1}{4}n$ there is at most one rightmost occurrence starting at $i$. For all the remaining $i$,
there are at most two rightmost occurrences starting at $i$, making the total
number of distinct squares at most $\frac{1}{4}n+2(2n-\frac{1}{4}n)=3\frac{3}{4}n$.

It might be the case that $w[\frac{1}{4}n..\frac{1}{2}n]$ is periodic. However, the number
of distinct squares occurring in $(w)$ is the same as the number of distinct 
squares occurring in any $(w^{(i)})$, so we are free to replace $w$ with any
of its cyclic shifts. We claim that if, for any $i=0,1,\ldots,n-1$,
$w^{(i)}[\frac{1}{4}n..\frac{1}{2}n]$ is periodic, then the whole $w$ is a nontrivial power
$y^p$ with $p\geq 8$. To show this, we need an auxiliary lemma that is a special case of Lemma 8.1.2
of~\cite{Lothaire2002}. We provide a proof for completeness.

\begin{lemma}
\label{lem:extendperiod}
For any word $w$ and characters $a,b$, if both $aw$ and $wb$ are periodic then their periods
are in fact equal.
\end{lemma}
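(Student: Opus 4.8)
The plan is to write $m=|w|$ and denote by $p$ and $q$ the periods of $aw$ and $wb$ respectively. Since both words are periodic and have length $m+1$, we have $p,q\le\frac{m+1}{2}\le m$. The crucial observation is that $w$ is a common factor of the two words: it is the length-$m$ suffix of $aw$ and the length-$m$ prefix of $wb$. A period of a word is inherited by all of its factors, so both $p$ and $q$ are periods of $w$. Because $p+q\le m+1\le |w|+\gcd(p,q)$, the periodicity lemma stated above applies and $d:=\gcd(p,q)$ is also a period of $w$.

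It then suffices to prove that $d$ is in fact a period of the whole word $aw$ and, symmetrically, of $wb$. Indeed, by minimality of $p$ this would give $d\ge p$, while $d=\gcd(p,q)\le p$ always holds, forcing $p=d$, and likewise $q=d$, so $p=q$. To show that $d$ is a period of $aw$ I would check the single position at which the claim is not immediate, namely the prepended letter. For every interior position the equality $aw[i]=aw[i+d]$ follows from $d$ being a period of $w$, so the only extra requirement is $a=w[d]$. To verify it, I would read off from the period $p$ that $a=aw[1]=aw[1+p]=w[p]$, and then use the period $d$ of $w$ together with $d\mid p$ to propagate $w[d]=w[2d]=\cdots=w[p]=a$. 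The symmetric computation for $wb$ uses the trailing letter: from the period $q$ one gets $w[m+1-q]=b$, and $d\mid q$ together with the period $d$ of $w$ yields $w[m+1-d]=b$, which is exactly the boundary condition needed for $d$ to be a period of $wb$.

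The bookkeeping with indices is routine; the one place that needs genuine care -- and which I expect to be the main obstacle -- is this boundary check, i.e.\ confirming that the extra character is consistent with the shorter period $d$ rather than only with $p$ (resp.\ $q$). This is precisely where the divisibilities $d\mid p$ and $d\mid q$ are used, together with the fact that all the intermediate indices appearing in the propagation stay within $[1,m]$, so that the period $d$ of $w$ may legitimately be applied at every step. Once these points are in place, minimality of the two periods closes the argument and gives $p=q$.
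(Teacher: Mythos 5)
Your proof is correct and follows essentially the same route as the paper's: both periods descend to $w$, the periodicity lemma gives $d=\gcd(p,q)$ as a period of $w$, the prepended (resp.\ appended) character is then shown consistent with $d$ by propagating along multiples of $d$ up to $p$ (resp.\ $q$), and minimality of the periods forces $p=q$. The only cosmetic difference is that the paper assumes $p\geq q$ by symmetry and extends $d$ to a period of $aw$ alone, whereas you carry out the boundary check for both $aw$ and $wb$ explicitly.
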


\begin{proof}
We assume that the period of $aw$ is $p\leq |aw|/2$ and the period of $wb$ is $q\leq |wb|/2$.
Then $p$ and $q$ are both periods of $w$. By symmetry, we can assume that $p\geq q$.
$p+q\leq (|aw|+|wb|)/2=1+|w|$, so by the periodicity lemma $\gcd(p,q)$ is a period of $w$. 
We claim that $\gcd(p,q)$ is also a period of $aw$. To prove this, it is enough to show that
$a=w[\gcd(p,q)]$. $\gcd(p,q)$ is a period of $w$ and, for $n\geq 2$,
$p\leq |w|$, so this is equivalent to showing that $a=w[p]$. But
this holds due to $p$ being a period of $aw$. Hence $\gcd(p,q)$ is a period of $aw$,
but $p$ is the period of $aw$ and $p\geq q$, therefore $p=q$.
\end{proof}

We observe that the substrings $w^{(i)}[\frac{1}{4}n..\frac{1}{2}n]$ correspond to all substrings
of length $\frac{1}{4}n$ of $ww$. By Lemma~\ref{lem:extendperiod}, if every substring of
length $\frac{1}{4}n$ of $ww$ is periodic, then the periods of all such substrings are the
same and equal to $d\leq \frac{1}{8}n$. Therefore, $d$ is also a period of the whole $ww$.
But then $\gcd(|w|,d)\leq d \leq \frac{1}{8}|w|$ is also a period of $ww$. We conclude that
$\gcd(|w|,d) \leq \frac{1}{8}|w|$ is period of $w$, hence $w=y^p$ for some $p\geq 8$,
as claimed.

It remains to analyze the number of distinct squares in a circular word $(w)$, where $w=y^p$
for $p \geq 8$. Each such square is a distinct square in $y^{p+1}$. The number of distinct
squares in $y^{p+1}$ is at most $2(p+1)|y|  = 2\frac{p+1}{p} n \leq 2.25n$, since $p\geq 8$.

\begin{theorem}
The number of distinct squares in a circular word of length $n$ is at most $3.75n$.
\end{theorem}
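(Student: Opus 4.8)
The plan is to combine the two scenarios already prepared above into a single dichotomy, exploiting the fact that the number of distinct squares in $(w)$ equals the number of distinct squares in $(w^{(i)})$ for every $i$. This invariance is what lets us freely pass to whichever rotation is most convenient, and it is the conceptual pivot of the whole argument.

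First I would split on whether \emph{some} rotation has an aperiodic middle quarter. If there exists an $i\in\{0,1,\ldots,n-1\}$ with $w^{(i)}[\frac14 n..\frac12 n]$ aperiodic, I replace $w$ by $w^{(i)}$ and invoke Lemma~\ref{lem:quarter}: every starting position in $\{1,\ldots,\frac14 n\}$ carries at most one rightmost occurrence, while each of the remaining $2n-\frac14 n$ starting positions in $ww$ carries at most two by the bound of Fraenkel and Simpson. Summing gives $\frac14 n + 2(2n-\frac14 n)=\frac{15}{4}n = 3.75n$ distinct squares, as desired.

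Otherwise every rotation has a periodic middle quarter, which is exactly the hypothesis handled by the propagation argument built on Lemma~\ref{lem:extendperiod}. The substrings $w^{(i)}[\frac14 n..\frac12 n]$ range over all length-$\frac14 n$ substrings of $ww$; since all of them are periodic, their periods coincide and equal some $d\leq \frac18 n$, so $d$ is a period of $ww$ and hence $\gcd(|w|,d)\leq\frac18|w|$ is a period of $w$. This forces $w=y^p$ with $p\geq 8$, and then every square occurring in $(w)$ is a distinct square of $y^{p+1}$, of which there are at most $2(p+1)|y|=2\frac{p+1}{p}n\leq 2.25n$. Since $2.25n<3.75n$, the claimed bound holds in this case too.

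The only delicate point is confirming that the two cases are genuinely exhaustive and, within each, that rotation-invariance legitimately reduces the count to the chosen favourable rotation; the real technical weight sits in the periodicity-propagation step, since one must verify that local periodicity of \emph{every} length-$\frac14 n$ window forces a global period of $ww$ small enough to make $w$ a high power. Both ingredients have already been established above, so assembling the theorem from the dichotomy is then routine.
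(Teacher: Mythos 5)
Your proof is correct and follows essentially the same route as the paper: the same dichotomy on whether some rotation has an aperiodic middle quarter $w^{(i)}[\frac14 n..\frac12 n]$, the same use of Lemma~\ref{lem:quarter} to get one rightmost occurrence per position in $\{1,\ldots,\frac14 n\}$ yielding $\frac14 n + 2(2n-\frac14 n)=3.75n$, and the same periodicity-propagation via Lemma~\ref{lem:extendperiod} forcing $w=y^p$ with $p\geq 8$ and the $2.25n$ bound in the complementary case.
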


To improve on the above upper bound, we need some of the machinery used by Deza et al.~\cite{DezaFT15}.
Two occurrences of squares $uu$ and $UU$ starting at the same position such that $|u|<|U|$
are called a double square and denoted $(u,U)$. If both are the rightmost occurrences,
this is an FS-double square. An FS-double square is identified with the starting position of
the two occurrences.

\begin{lemma}[see proof of Theorem 32 in~\cite{DezaFT15}]
\label{lemma:FS}
If $(u,U)$ is the leftmost FS-double square of a string $x$ and $|x|\geq 10$, then the number
of FS-double squares in $x$ is at most $\frac{5}{6}|x|-\frac{1}{3}|u|$.
\end{lemma}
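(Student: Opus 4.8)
The plan is to derive the bound from the structure theory of FS-double squares, in the spirit of Deza et al. The first step is to pin down the local structure forced by a single FS-double square $(u,U)$. Since $uu$ is a rightmost occurrence and $|u|<|U|$, one first shows $|U|<2|u|$: if $2|u|\le|U|$ then $uu$ is a prefix of $U$, so it would reoccur at position $i+|U|$ inside $UU$, contradicting rightmostness of the occurrence at $i$. Writing $U=uv$ and setting $p:=|v|=|U|-|u|<|u|$, comparing the second copy of $u$ inside $uu$ with $U$ shows that $u[j]=u[j-p]$ for all valid $j$, i.e.\ $u$ has period $p$, and in fact $U=u\cdot u[1..p]$, so $U$ is just $u$ extended by one more period block. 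I would record this as a structural lemma and attach to each FS-double square its period $p$ together with the maximal run of period $p$ that contains it; this is exactly the canonical factorization used in the source.

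The second step is a separation lemma controlling how densely FS-double squares can occur. Order them by starting position as $s_1<s_2<\dots<s_m$, with $s_1=i$ the leftmost. Two FS-double squares sharing the same period must lie inside a common run, and the periodicity lemma (which is available) then restricts how many rightmost square-occurrences that run can host; FS-double squares with distinct periods interact only weakly and must be spread out. Combining these constraints with the boundary condition that the last square $U_mU_m$ fits inside $x$ (so $s_m+2|U_m|\le|x|$) bounds the number of admissible starting positions. The amortized accounting---charging each FS-double square to a block of positions whose density is at most $\tfrac56$---is what yields the leading coefficient, and this is precisely the delicate combinatorial core imported from the proof of Theorem 32 in Deza et al.

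The corrective term $-\tfrac13|u|$ comes from a refinement at the left end. Because $(u,U)$ is the \emph{leftmost} FS-double square and $u$ has period $p$, the periodic region carrying $uu$ at the start of $x$ cannot host rightmost square-occurrences at the full $\tfrac56$ density: the run structure together with the two-rightmost-occurrences constraint forces the count inside this initial window of length proportional to $|u|$ to fall short of the uniform estimate. Tracking this shortfall through the summation sharpens the count from $\tfrac56|x|$ to $\tfrac56|x|-\tfrac13|u|$, with the precise constant $\tfrac13$ emerging from the density accounting. The hypothesis $|x|\ge10$ is exactly what rules out the short degenerate strings in which the asymptotic density estimate could otherwise be violated.

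I expect the separation lemma and its case analysis to be the main obstacle: FS-double squares with equal periods cluster inside a single run and must be counted via the periodicity lemma, while those with distinct periods interact through overlap arguments, and both regimes have to be handled simultaneously to pin down both the $\tfrac56$ density and the boundary deficit. Rather than rederive these constants from scratch, the cleanest route is to invoke the canonical-factorization machinery of Deza et al.\ directly and specialize it to a string whose leftmost FS-double square is fixed.
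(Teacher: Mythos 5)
The paper offers no proof of this lemma at all: it is imported verbatim from the proof of Theorem~32 in Deza et al.~\cite{DezaFT15}, exactly as the bracketed attribution indicates, and your proposal ends in the same place --- after the correct elementary observations that $|u|<|U|<2|u|$ and that $u$ has period $|U|-|u|$ with $U=u\cdot u[1..(|U|-|u|)]$, you explicitly defer the $\frac{5}{6}$-density argument and the $-\frac{1}{3}|u|$ deficit to the machinery of that same source. So your approach coincides with the paper's (proof by citation); just note that your middle paragraphs are a plausibility sketch rather than a derivation --- the ``separation lemma,'' the claim that equal-period FS-double squares share a run, and the origin of the constant $\frac{1}{3}$ are stated but never established, so they could not stand on their own without the cited argument.
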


We again consider the rightmost occurrence of every distinct square of length up to $n$
in $ww$ and assume that $w[\frac{1}{4}n..\frac{1}{2}n]$ is aperiodic (as otherwise we already know
there are at most $2.25n$ distinct squares). We need to consider two cases: either there
are no rightmost occurrences starting at $i=1,2,\ldots,\frac{1}{4}n$, or there is at least one
such occurrence.

\paragraph{No rightmost occurrences starting at $i=1,2,\ldots,\frac{1}{4}n$.} In this case,
it is enough to bound the number of distinct squares in $\hat{w}=w[(\frac{1}{4}n+1)..n]w$.
Let $i$ be the starting position of the leftmost FS-double square $(u,U)$ in $\hat{w}$.
If $i>\frac{3}{4}n$ then the total number of distinct squares is at most $\frac{3}{4}n+2n=2\frac{3}{4}n$,
so we assume $i\leq \frac{3}{4}n$. Then, the total number of distinct squares can be bounded by
applying Lemma~\ref{lemma:FS} on $w[(\frac{1}{4}n+i)..n]w$ to show that the number
of FS-double squares is at most
\[ \frac{5}{6}(\frac{7}{4}n-i+1) - \frac{1}{3} |u| \]
We know that $i+2|u| > \frac{3}{4}n$, as otherwise $uu$ would occur later in $w$. Therefore,
the maximum number of distinct squares is
\begin{equation}
\frac{7}{4}n + \frac{5}{6}(\frac{7}{4}n-i+1) - \frac{1}{3}\frac{\frac{3}{4}n-i+1}{2} = (\frac{7}{4}+\frac{35}{24}-\frac{1}{8})n-(\frac{5}{6}-\frac{1}{6})i+\frac{4}{6} \leq 3\frac{1}{12}n \label{eq:remaining}
\end{equation}

\paragraph{At least one rightmost occurrence starting at $i\in\{1,2,\ldots,\frac{1}{4}n\}$.}
We now move to the more interesting case where there are some rightmost occurrences starting
at $i=1,2,\ldots,\frac{1}{4}n$. We then know by Lemma~\ref{lem:quarter} that they all correspond to squares of the
same length $2\ell$. Let $i\in \{1,2,\ldots,\frac{1}{4}n\}$ be the starting position of one of
these rightmost occurrences. Then, $i+2\ell > n$ as otherwise the square would occur later
in the second $w$, so $\ell > (n-\frac{n}{4})/2 = \frac{3}{8}n$. We also know that $\ell < \frac{1}{2}n$, as otherwise
$w=y^{2}$ and there are only $3n$ distinct squares. To conclude, $\ell \in (\frac{3}{8}n,\frac{1}{2}n)$.
Observe that, due to the square starting at position $i$, the aperiodic substring $s=w[\frac{1}{4}n..\frac{1}{2}n]$
also occurs at position $\frac{1}{4}n+\ell$ in $ww$. Therefore, we can rotate $w$ by $\ell$ 
and repeat the whole reasoning. We either obtain that the number of distinct squares is at most
$3\frac{1}{12}n$ (if, in $w^{(\ell)}w^{(\ell)}$, there are no rightmost occurrences starting at $i=1,2,\ldots,\frac{1}{4}n$), or there is another occurrence of $s$
at position $\frac{1}{4}n+\ell+\ell'-n$ in $w$, where $\ell,\ell' \in (\frac{3}{8}n,\frac{1}{2}n)$.
Because $s$ is aperiodic and $\ell+\ell' > \frac{3}{4}n$, the other occurrence must actually be at
position $\frac{1}{4}n-\Delta$, where $\Delta\in (\frac{1}{8}n,\frac{1}{4}n)$. By repeating
this enough times (and recalling that two occurrences of $s$ cannot be too close to each
other, as otherwise $s$ is not aperiodic), we either obtain that there are at most $3\frac{1}{12}n$
distinct squares or all occurrences of $s$ in $(w)$ are at positions $\frac{1}{4}n+\sum_{j=1}^{i-1}\Delta_{j}$
(recall that $(w)$ denotes the circular word, so we calculate positions modulo $n$) for $i=1,2,\ldots,d$,
where $\sum_{j=1}^{d}\Delta_{j}=n$ and $\Delta_{j}\in (\frac{1}{8}n,\frac{1}{4}n)$ for every $j=1,2,\ldots,d$.
That is, the whole $(w)$ is covered by the occurrences of $s$, and because $s$ is aperiodic these
occurrences overlap by less than $\frac{1}{8}n$. Observe that there cannot be any other occurrences
of $s$ in $(w)$, because the additional occurrence would overlap with one of the already found
occurrences by at least $\frac{1}{8}n$, thus contradiction the assumption that $s$ is aperiodic.
By the constraints on $\Delta_{j}$, $d\in \{5,6,7\}$. See Figure~\ref{fig:cycle} for an illustration
with $d=7$.
We further consider three possible subcases.

\begin{figure}[t]
\begin{center}
\includegraphics[width=0.5\textwidth]{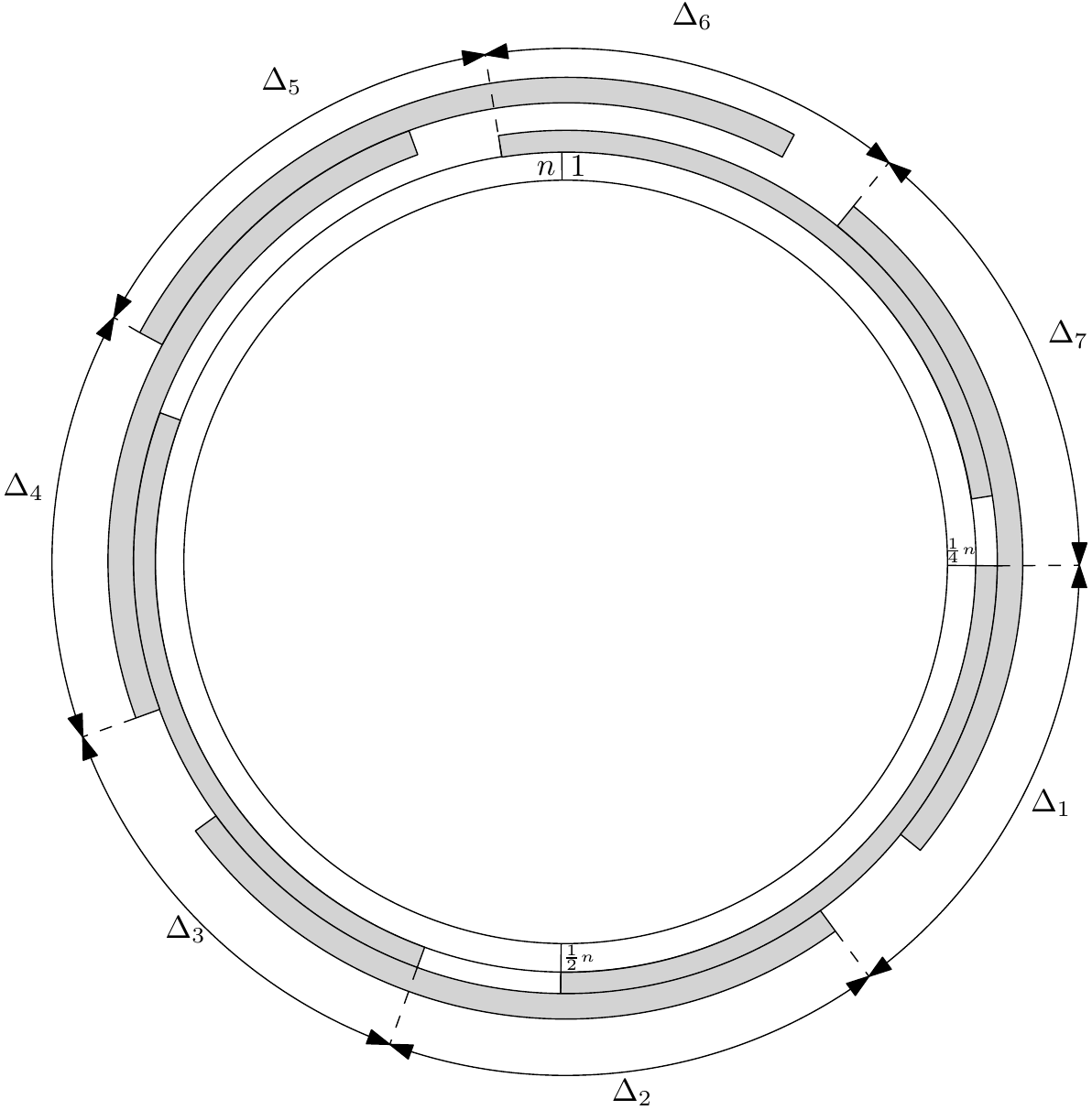}
\end{center}
\caption{Seven occurrences of an aperiodic $s$ of length $\frac{1}{4}n$ inside $(w)$.}
\label{fig:cycle}
\end{figure}

\paragraph{$d=5$. } In such case, we have $\Delta_{j}\geq \frac{1}{5}n$ for some $j$. By rotating
$w$, we can assume that $j=1$. Recall that then all squares starting at $i=1,2,\ldots,\frac{1}{4}n$
have the same length $2\ell$ (and there is at least one such square), so there is another occurrence
of $s$ starting at position $\frac{1}{4}n+\ell$, and then by repeating the reasoning at position
$\frac{1}{4}n+\ell+\ell'$, where $\ell+\ell'=n-\Delta_{1}$ (due to $\ell,\ell'\in (\frac{3}{8}n,\frac{1}{2}n)$).
Combining this with $\Delta_{1}\geq\frac{1}{5}n$, we obtain that $\min\{\ell,\ell'\} \leq \frac{2}{5}n$.
By again rotating $w$, we can assume that in fact $\ell \leq \frac{2}{5}n$.
Let $i\in\{1,2,\ldots,\frac{1}{4}n\}$ be the starting position of a rightmost occurrence of a square 
of length $2\ell$. Then $i+2\ell>n$ as otherwise it would not be a rightmost occurrence,
so $i > \frac{1}{5}n$ and we obtain that there are less than $\frac{1}{4}n-\frac{1}{5}n=\frac{1}{20}n$
rightmost occurrences starting at $i=1,2,\ldots,\frac{1}{4}n$. By the previous calculation~(\ref{eq:remaining})
the number of remaining rightmost occurrences is at most $3\frac{1}{12}n$,
making the total number of distinct squares at most $3\frac{2}{15}n$.

\paragraph{$d=6$.} We will show that this is, in fact, not possible. 
Recall that, for every $i=1,2,\ldots,6$, after rotating $w$ by $r=\sum_{j=1}^{i-1}\Delta_{j}$ 
we obtain that there is at least one rightmost occurrence starting in the prefix of length
$\frac{1}{4}n$ of $w^{(r)}w^{(r)}$, and in fact, by Lemma~\ref{lem:quarter}, all such rightmost occurrences correspond
to squares of the same length $2\ell_{i}$, where $\ell_{i}\in (\frac{3}{8}n,\frac{1}{2}n)$.
Thus, for every occurrence of $s$ starting at position $\frac{1}{4}n+\sum_{j=1}^{i-1}\Delta_{j}$,
there is another occurrence at position $\frac{1}{4}n+\sum_{j=1}^{i-1}\Delta_{j}+\ell_{i}$
in $(w)$ (recall that the positions are taken modulo $n$). We claim that
$\ell_{i}=\Delta_{i}+\Delta_{i+1}$ or $\ell_{i}=\Delta_{i}+\Delta_{i+1}+\Delta_{i+2}$,
where the indices are taken modulo 6. Certainly, $\ell_{i}=\Delta_{i}+\Delta_{i+1}+\ldots+\Delta_{i+k}$
for some $k$. We cannot have $k=0$ because $\ell_{i}>\frac{3}{8}n$ and $\Delta_{i}<\frac{3}{8}n$.
We also cannot have $k\geq 3$, because $\ell_{i}<\frac{1}{2}n$ and
$\Delta_{i}+\Delta_{i+1}+\Delta_{i+2}+\Delta_{i+3} > \frac{1}{2}n$. So,
$k=1$ or $k=2$. For every $i=1,2,\ldots,6$, we define $\successor(i)\in \{1,2,\ldots,6\}$
as follows. If $\ell_{i}=\Delta_{i}+\Delta_{i+1}$ then we set $\successor(i)=i+2$,
and otherwise (if $\ell_{i}=\Delta_{i}+\Delta_{i+1}+\Delta_{i+2}$) $\successor(i)=i+3$.
Intuitively, every occurrence of $s$ in $(w)$ points to another such occurrence.
Due to $\ell_{i}\in (\frac{3}{8}n,\frac{1}{2}n)$ holding for every $i=1,2,\ldots,6$,
the difference between the starting positions of the $i$-th and the $\successor(i)$-th
occurrence of $s$ belongs to $(\frac{3}{8}n,\frac{1}{2}n)$, so the difference between
the starting position of the $i$-th and the $\successor(\successor(i))$-th occurrence
of $s$ belongs to $(\frac{3}{4}n,n)$. In fact, due to $s$ being aperiodic, the latter
difference must belong to $(\frac{3}{4}n,\frac{7}{8}n)$. Consequently, there are no
other occurrences of $s$ between the $\successor(\successor(i))$-th and the $i$-th,
so $\successor(\successor(i))=i-1$.
Now, we consider two cases:
\begin{enumerate}
\item $\successor(1)=3$, then $\successor(3)=6$, so $\successor(6)=2$, $\successor(2)=5$ and $\successor(5)=1$.
\item $\successor(1)=4$, then $\successor(4)=6$, so $\successor(6)=3$, $\successor(3)=5$, $\successor(5)=2$, $\successor(2)=4$.
\end{enumerate}
In both cases, we obtain that $\successor(i)=\successor(j)$ for some $i\neq j$. But this is a contradiction,
because then there are two occurrences of $s$ within distance less than $\frac{1}{8}n$, so $s$
is not aperiodic.

\paragraph{$d=7$.} We define $\successor(i)$ for every $i=1,2,\ldots,7$ as in the previous case.
Because $\successor(i)\in \{i+2,i+3\}$ and $\successor(\successor(i))=i-1$ still holds, we obtain
that in fact $\successor(i)=i+3$ for every $i=1,2,\ldots,7$. This means that $\ell_{i}=\Delta_{i}+\Delta_{i+1}+\Delta_{i+2}$.
Consider all rightmost occurrences starting at $i=1,2,\ldots,\frac{1}{4}n$. We must have
that $i+2\ell_{1}>n$ for each of them, so $i>n-2(\Delta_{1}+\Delta_{2}+\Delta_{3})$, making the total
number of such occurrences at most $\min\{\frac{1}{4}n,2(\Delta_{1}+\Delta_{2}+\Delta_{3})-\frac{3}{4}n\}$.
Because $\Delta_{1}+\Delta_{2}+\Delta_{3}\leq \frac{1}{2}n$ due to $\Delta_{i}>\frac{1}{8}n$ holding for every $i=1,2,\ldots,7$
and $\sum_{i=1}^{7}\Delta_{i}=n$, this number is actually $2(\Delta_{1}+\Delta_{2}+\Delta_{3})-\frac{3}{4}n$.

\begin{figure}[t]
\begin{center}
\includegraphics[width=0.8\textwidth]{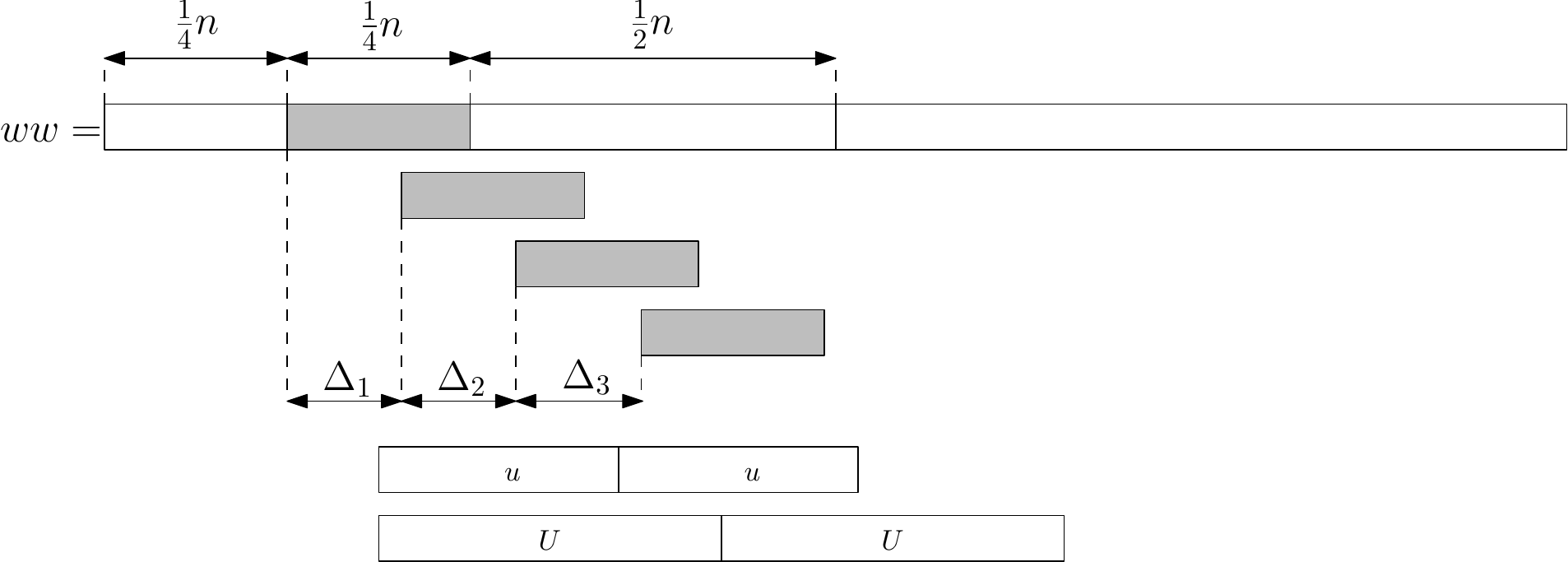}
\end{center}
\caption{The leftmost FS-square starting at position $j\leq\frac{1}{4}n+\Delta_{1}$.}
\label{fig:uu}
\end{figure}

Now we must account for the remaining distinct squares. Let $j$ be the starting position of the leftmost
FS-double square $(u,U)$ in $ww$. Note that $j>\frac{1}{4}n$ because there is at most one rightmost occurrence
starting at $i=1,2,\ldots,\frac{1}{4}n$. We lower bound $j$ by considering two possible cases:
\begin{enumerate}
\item $j > \frac{1}{4}n+\Delta_{1}$.
\item $j \leq \frac{1}{4}n+\Delta_{1}$, then the occurrences of $s$ starting at $\frac{1}{4}n+\Delta_{1}$ and
$\frac{1}{4}n+\Delta_{1}+\Delta_{2}+\Delta_{3}$ are disjoint and both fully inside the first $w$, because
$\Delta_{1}+\Delta_{2}+\Delta_{3}\leq \frac{1}{2}n$. Thus, both $u$ and $U$ contain $s$ as a substring.
See Figure~\ref{fig:uu}.
Then, because all occurrences of $s$ start at positions of the form $\frac{1}{4}n+\sum_{j=1}^{i-1}\Delta_{j}$,
we conclude that $|u|=\Delta_{2}+\Delta_{3}$ and $|U|=\Delta_{2}+\Delta_{3}+\Delta_{4}$.
So, $j > n-2(\Delta_{2}+\Delta_{3})$.
\end{enumerate}
We now know that $j>\min\{\frac{1}{4}n+\Delta_{1},n-2(\Delta_{2}+\Delta_{3})\}$.
Using $j+2|u|>n$ we obtain that the number of remaining distinct squares is at most
\[ 1\frac{3}{4}n+\frac{5}{6}(2n-j)-\frac{1}{3}|u| \leq 3\frac{5}{12}n-\frac{5}{6}j-\frac{1}{3}\frac{n-j}{2} =3\frac{1}{4}n-\frac{2}{3}j\]
so the total number of squares is
\begin{align*}
& \leq 3\frac{1}{4}n+2(\Delta_1+\Delta_2+\Delta_3)-\frac{3}{4}n-\frac{2}{3}j\\
& \leq 2\frac{1}{2}n+ 2(\Delta_{1}+\Delta_{2}+\Delta_{3})-\frac{2}{3}\min\{\frac{1}{4}n+\Delta_{1},n-2(\Delta_{2}+\Delta_{3})\}
\end{align*}
We rewrite the above in terms of $\ell_{1}$ and $\Delta_{1}$:
\[ 2\frac{1}{2}n+ 2\ell_{1}-\frac{2}{3}\min\{\frac{1}{4}n+\Delta_{1},n-2\ell_{1}+2\Delta_{1}\}
\leq 2\frac{1}{2}n+ 2\ell_{1}-\frac{2}{3}\min\{\frac{3}{8}n,\frac{5}{4}n-2\ell_{1}\}
\]
The above expression is increasing in $\ell_{1}$. 
Because $\sum_{i=1}^{7}\ell_{i}=\sum_{i=1}^{7}(\Delta_{i}+\Delta_{i+1}+\Delta_{i+2}) = 3n$, after an appropriate rotation we can assume that $\ell_{1}\leq \frac{3}{7}n$,
and bound the expression:
\[ 2\frac{1}{2}n+\frac{6}{7}n-\frac{2}{3}\min\{\frac{3}{8}n,\frac{5}{4}n-\frac{6}{7}n\}=
3\frac{5}{14}n-\frac{1}{4}n=3\frac{3}{28}n \]

\paragraph{Wrapping up. } We have obtained that either there is an aperiodic substring
of length $\frac{1}{4}n$, and thus there are at most $2.25n$ distinct squares, or
there are no rightmost occurrences starting at $i=1,2,\ldots,\frac{1}{4}n$ and the maximum
number of distinct squares is $3\frac{1}{12}n$, or there is at least at least one rightmost
occurrence starting at $i\in\{1,2,\ldots,\frac{1}{4}n\}$. In the last case,
either $d=5$ and there are at most $3\frac{2}{15}n$ distinct squares, or $d=7$
and there are at most $3\frac{3}{28}n$ distinct squares. The maximum of these
upper bounds is $3\frac{2}{15}n$.

\begin{theorem}
The number of distinct squares in a circular word of length $n$ is at most $3.14n$.
\end{theorem}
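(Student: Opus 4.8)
The plan is to refine the argument behind the $3.75n$ bound by squeezing more savings out of the aperiodic case. As before I work with the doubled word $ww$ and the rightmost occurrences of the distinct squares of length at most $n$, so by the Fraenkel--Simpson lemma each starting position carries at most two of them. If for some rotation the middle block $w^{(i)}[\frac{1}{4}n..\frac{1}{2}n]$ is periodic, then by Lemma~\ref{lem:extendperiod} every factor of length $\frac{1}{4}n$ of $ww$ shares a common period $d\le\frac{1}{8}n$, forcing $w=y^p$ with $p\ge 8$ and hence at most $2.25n$ distinct squares; so I may assume the middle block $s=w[\frac{1}{4}n..\frac{1}{2}n]$ is aperiodic and invoke Lemma~\ref{lem:quarter}, which pins all rightmost occurrences in the first quarter to a single length $2\ell$.

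The decisive dichotomy is whether any rightmost occurrence starts in $\{1,\dots,\frac{1}{4}n\}$. If none does, I would replace $ww$ by the shorter window $\hat w=w[(\frac{1}{4}n+1)..n]w$, locate its leftmost FS-double square, and apply Lemma~\ref{lemma:FS} together with the constraint $i+2|u|>\frac{3}{4}n$ forced by rightmost-ness; plugging these into the counting yields the clean bound $3\frac{1}{12}n$ as in~(\ref{eq:remaining}). If at least one such occurrence exists, then the single length $2\ell\in(\frac{3}{4}n,n)$ makes $s$ reappear after a shift of $\ell$, and iterating this---each step moving by some $\Delta_j\in(\frac{1}{8}n,\frac{1}{4}n)$, since $s$ is aperiodic and the shifts are large---shows that the occurrences of $s$ tile the circle into $d$ overlapping copies with $\sum_j\Delta_j=n$, so $d\in\{5,6,7\}$.

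The heart of the proof is analyzing these three values of $d$. For each $i$ the same-length property gives $\ell_i=\Delta_i+\Delta_{i+1}$ or $\ell_i=\Delta_i+\Delta_{i+1}+\Delta_{i+2}$, which I encode as a $\successor$ map with $\successor(i)\in\{i+2,i+3\}$; aperiodicity of $s$ then forces $\successor(\successor(i))=i-1$. For $d=6$ this functional equation admits only two solution patterns, and both collapse two indices to a common successor, contradicting aperiodicity, so $d=6$ is impossible. For $d=5$ some gap is at least $\frac{1}{5}n$, which after a rotation shrinks the number of first-quarter rightmost occurrences below $\frac{1}{20}n$ and, combined with~(\ref{eq:remaining}), gives $3\frac{2}{15}n$. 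For $d=7$ the equation forces $\successor(i)=i+3$, hence $\ell_i=\Delta_i+\Delta_{i+1}+\Delta_{i+2}$; bounding the first-quarter occurrences by $2(\Delta_1+\Delta_2+\Delta_3)-\frac{3}{4}n$ and the remaining ones via Lemma~\ref{lemma:FS} after lower-bounding the leftmost FS-double square position $j$, then rotating so that $\ell_1\le\frac{3}{7}n$ since $\sum_i\ell_i=3n$, yields $3\frac{3}{28}n$.

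Taking the maximum over the surviving cases---$2.25n$, $3\frac{1}{12}n$, $3\frac{2}{15}n$, and $3\frac{3}{28}n$---the binding bound is $3\frac{2}{15}n=3.1\overline{3}n\le 3.14n$, which is the claim. The main obstacle I anticipate is the $d=6$ elimination: making the $\successor$ map well-defined relies on the tight interval $(\frac{3}{4}n,\frac{7}{8}n)$ for double shifts, and closing off every alternative successor pattern is exactly where the aperiodicity of $s$ must be exploited most delicately.
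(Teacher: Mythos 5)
Your proposal follows the paper's own proof essentially step for step: the same reduction of the periodic case to $w=y^p$ with $p\ge 8$ (giving $2.25n$), the same dichotomy on first-quarter rightmost occurrences with the $3\frac{1}{12}n$ bound via Lemma~\ref{lemma:FS}, the same tiling of $(w)$ by occurrences of the aperiodic block $s$ with $d\in\{5,6,7\}$, the same $\successor$-map elimination of $d=6$, and the same case bounds $3\frac{2}{15}n$ and $3\frac{3}{28}n$ for $d=5$ and $d=7$. The plan is correct and matches the paper's argument, including the final maximum $3\frac{2}{15}n\le 3.14n$.
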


\section{Conclusions}

We believe that it should be possible to show an upper bound of $3n$, possibly without using
the machinery of Deza et al., but it seems to require some new combinatorial insights. A
computer search seems to suggest that the right answer is $1.25n$, but showing this
is probably quite difficult. Another natural direction for a follow-up work is to consider higher
powers in circular words.

\bibliographystyle{plain}
\bibliography{biblio}

\begin{thebibliography}{10}

\bibitem{Blanchet-SadriMS09}
Francine Blanchet{-}Sadri, Robert Mercas, and Geoffrey Scott.
\newblock Counting distinct squares in partial words.
\newblock {\em Acta Cybern.}, 19(2):465--477, 2009.

\bibitem{BlandS15}
Widmer Bland and William~F. Smyth.
\newblock Three overlapping squares: The general case characterized {\&}
  applications.
\newblock {\em Theor. Comput. Sci.}, 596:23--40, 2015.

\bibitem{CastiglioneRS09}
Giusi Castiglione, Antonio Restivo, and Marinella Sciortino.
\newblock Circular {S}turmian words and {H}opcroft's algorithm.
\newblock {\em Theor. Comput. Sci.}, 410(43):4372--4381, 2009.

\bibitem{CrochemoreFMP16}
Maxime Crochemore, Gabriele Fici, Robert Mercas, and Solon~P. Pissis.
\newblock Linear-time sequence comparison using minimal absent words {\&}
  applications.
\newblock In {\em {LATIN}}, volume 9644 of {\em Lecture Notes in Computer
  Science}, pages 334--346. Springer, 2016.

\bibitem{CrochemoreIKKRRTW12}
Maxime Crochemore, Costas~S. Iliopoulos, Tomasz Kociumaka, Marcin Kubica, Jakub
  Radoszewski, Wojciech Rytter, Wojciech Tyczynski, and Tomasz Walen.
\newblock The maximum number of squares in a tree.
\newblock In {\em {CPM}}, volume 7354 of {\em Lecture Notes in Computer
  Science}, pages 27--40. Springer, 2012.

\bibitem{CrochemoreR95}
Maxime Crochemore and Wojciech Rytter.
\newblock Squares, cubes, and time-space efficient string searching.
\newblock {\em Algorithmica}, 13(5):405--425, 1995.

\bibitem{Currie02}
James~D. Currie.
\newblock There are ternary circular square-free words of length $n$ for $n\geq
  18$.
\newblock {\em Electr. J. Comb.}, 9(1), 2002.

\bibitem{CurrieF02}
James~D. Currie and D.~Sean Fitzpatrick.
\newblock Circular words avoiding patterns.
\newblock In {\em Developments in Language Theory}, volume 2450 of {\em Lecture
  Notes in Computer Science}, pages 319--325. Springer, 2002.

\bibitem{DezaFT15}
Antoine Deza, Frantisek Franek, and Adrien Thierry.
\newblock How many double squares can a string contain?
\newblock {\em Discrete Applied Mathematics}, 180:52--69, 2015.

\bibitem{FanPST06}
Kangmin Fan, Simon~J. Puglisi, William~F. Smyth, and Andrew Turpin.
\newblock A new periodicity lemma.
\newblock {\em {SIAM} J. Discrete Math.}, 20(3):656--668, 2006.

\bibitem{FineWilf}
N.J. Fine and H.S. Wilf.
\newblock Uniqueness theorems for periodic functions.
\newblock In {\em Proc. Am. Math. Soc.}, volume~16, pages 109--114, 1965.

\bibitem{FraenkelS98}
Aviezri~S. Fraenkel and Jamie Simpson.
\newblock How many squares can a string contain?
\newblock {\em J. Comb. Theory, Ser. {A}}, 82(1):112--120, 1998.

\bibitem{FranekFSS12}
Frantisek Franek, Robert C.~G. Fuller, Jamie Simpson, and William~F. Smyth.
\newblock More results on overlapping squares.
\newblock {\em J. Discrete Algorithms}, 17:2--8, 2012.

\bibitem{GawrychowskiKRW15}
Pawel Gawrychowski, Tomasz Kociumaka, Wojciech Rytter, and Tomasz Walen.
\newblock Tight bound for the number of distinct palindromes in a tree.
\newblock In {\em {SPIRE}}, volume 9309 of {\em Lecture Notes in Computer
  Science}, pages 270--276. Springer, 2015.

\bibitem{HegedusN14}
L{\'{a}}szl{\'{o}} Heged{\"{u}}s and Benedek Nagy.
\newblock Representations of circular words.
\newblock In {\em {AFL}}, volume 151 of {\em {EPTCS}}, pages 261--270, 2014.

\bibitem{Ilie05}
Lucian Ilie.
\newblock A simple proof that a word of length \emph{n} has at most 2\emph{n}
  distinct squares.
\newblock {\em J. Comb. Theory, Ser. {A}}, 112(1):163--164, 2005.

\bibitem{Ilie2007note}
Lucian Ilie.
\newblock A note on the number of squares in a word.
\newblock {\em Theoretical Computer Science}, 380(3):373--376, 2007.

\bibitem{KopylovaS12}
Evguenia Kopylova and William~F. Smyth.
\newblock The three squares lemma revisited.
\newblock {\em J. Discrete Algorithms}, 11:3--14, 2012.

\bibitem{Lothaire2002}
M.~{Lothaire}, editor.
\newblock {\em Algebraic Combinatorics on Words}, volume~90 of {\em
  Encyclopedia of Mathematics and its Applications}.
\newblock Cambridge University Press, Cambridge, 2002.

\bibitem{ManeaS15}
Florin Manea and Shinnosuke Seki.
\newblock Square-density increasing mappings.
\newblock In {\em {WORDS}}, volume 9304 of {\em Lecture Notes in Computer
  Science}, pages 160--169. Springer, 2015.

\bibitem{MasseBGL11}
Alexandre~Blondin Mass{\'{e}}, Srecko Brlek, Ariane Garon, and S{\'{e}}bastien
  Labb{\'{e}}.
\newblock Equations on palindromes and circular words.
\newblock {\em Theor. Comput. Sci.}, 412(27):2922--2930, 2011.

\bibitem{Shur10}
Arseny~M. Shur.
\newblock On ternary square-free circular words.
\newblock {\em Electr. J. Comb.}, 17(1), 2010.

\bibitem{Simpson07}
Jamie Simpson.
\newblock Intersecting periodic words.
\newblock {\em Theor. Comput. Sci.}, 374(1-3):58--65, 2007.

\bibitem{Simpson14}
Jamie Simpson.
\newblock Palindromes in circular words.
\newblock {\em Theor. Comput. Sci.}, 550:66--78, 2014.

\bibitem{thue1906}
A.~Thue.
\newblock {\"U}ber unendliche zeichenreihen.
\newblock {\em Norske Vid. Selsk. Skr., I Mat.--Nat. Kl., Christiania},
  7:1--22, 1906.

\end{thebibliography}

\end{document}